\newcommand{\keywords}[1]{\par\addvspace\baselineskip
\noindent\keywordname\enspace\ignorespaces#1}
\begin{document}

\mainmatter

\title{On the Algorithmic Recovering of Coefficients in Linearizable Differential Equations}
\titlerunning{Algorithmic Recovering of Coefficients in Linearizable Differential Equations}

\author{Dmitry~A.~Lyakhov\inst{1} \and 
Dominik~L.~Michels\inst{2}}
\institute{KAUST Computational Sciences Group, Campus Bldg.~1, Thuwal 23955-6900, KSA\\
\inst{1}\email{dmitry.lyakhov@kaust.edu.sa}\\
\inst{2}\email{dominik.michels@kaust.edu.sa}\\
}

\authorrunning{D.~A.~Lyakhov et al.}
\toctitle{Lecture Notes in Computer Science}
\tocauthor{D. Lyakhov et al.}

%%%%%%%%%%%%%%%%%%%%%%%%%%%%%%%%%%%%%%%%%%%%%%%%%%

\maketitle

\begin{abstract}
We investigate the problem of recovering coefficients in scalar nonlinear ordinary differential equations that can be exactly linearized. This contribution builds upon prior work by Lyakhov, Gerdt, and Michels, which focused on obtaining a linearizability certificate through point transformations. Our focus is on quasi-linear equations, specifically those solved for the highest derivative with a rational dependence on the variables involved. Our novel algorithm for coefficient recovery relies on basic operations on Lie algebras, such as computing the derived algebra and the dimension of the symmetry algebra. This algorithmic approach is efficient, although finding the linearization transformation necessitates computing at least one solution of the corresponding Bluman-Kumei equation system.

\keywords{Algorithmic Recovery, Exact Linearization, Nonlinear Ordinary Differential Equations, Point Symmetry, Lie Algebras.}
\end{abstract}

%%%%%%%%%%%%%%%%%%%%%%%%%%%%%%%%%%%%%%%%%%%%%%%%%%
\section{Introduction}
\label{sec:1}

Differential equations are essential tools in various scientific and engineering fields, enabling us to describe and understand a wide range of phenomena ranging from the dynamics of molecular structures~\cite{MICHELS2015336} and quantum systems~\cite{PhysRevA.101.043834} over electromagnetic effects~\cite{10.1145/3414685.3417799} to fluid mechanics~\cite{10.1145/3478513.3480495} and complex atmospheric phenomena~\cite{10.1145/3414685.3417801}.

In this paper, we present an algorithm aimed at transforming a non-linear differential equation into a linear form using an analytical invertible transformation. We introduce novel methods and theorems that surpass the current state-of-the-art in this area.

More than 150 years ago, Lie introduced a systematic approach that is now known as symmetry analysis. This powerful technique was later revitalized in the 20th century by Ovsyannikov and his students. Lie's vision involved using groups of point transformations to tackle systems of partial differential equations (PDE). His ingenious idea was first to identify the infinitesimal generators of one-parameter symmetry subgroups, which would then be used to build the complete symmetry group. By analyzing the symmetries within a differential equation, we can obtain valuable insights into the underlying structure of the problem it represents. Notably, the presence of a symmetry group offers several advantages: It can reduce the differential order of an equation, potentially transform it from partial to ordinary form, and facilitate the construction of particular and even, in some cases, general solutions.

In contrast to Lie, Lyakhov et al.~recently discovered~\cite{lyakhov2017algorithmic,lyakhov2020algorithmic} that such kind of properties as exact linearization could be detected algorithmically without solving the determining system. It relies strongly on differential algebra and symbolic manipulations with differential equations. Once the linearization certificate is obtained for the differential equation under consideration, the next step is constructing the linearizing mapping. In previous work~\cite{lyakhov2017algorithmic,lyakhov2020algorithmic}, we used \textbf{LinearizationTestII} for the construction of the determining system. Unfortunately, it suffers from substantial computational complexity caused by the high non-linearity of the initial PDE system. In this regard, it is worth to mention a novel approach \cite{mohammadi2019introduction,mohammadi2021symmetry} working on the a ``linearized'' Lie algebra level instead of the nonlinear Lie Group. Linear structures of the same dimension is in favor in computational algebra as they satisfy complexity requirements.

%This paper is organized as follows. In Sect.~\ref{sec:2}, we briefly describe the mathematical objects we deal with and the former result on linearization by point transformation~\cite{LyakhovGerdtMichels}. In Sect.~\ref{sec:3}, we introduce contact symmetry and prove the main theorem of our paper. The implementation of algorithms and its application is illustrated in Sect.~\ref{sec:4} by several examples. Finally, we provide a conclusion in Sect.~\ref{sec:5}.

\section{Linearization Certificate}
\label{sec:2}

Let us consider scalar ordinary differential equations (ODE) of the form\\[-0.3cm]
\begin{equation}\label{ode}
y^{(n)}(x)+f(x,y,y^{\prime},\ldots,y^{(n-1)})=0\,,\quad y^{(k)}:=\frac{d^ky}{dx^k}
\end{equation}
with rational right-hand side $f\in \mathbb{C}(x,y,y^{\prime},\ldots,y^{(n-1)})$  solved with respect to the highest order derivative. Given an ODE of the form of Eq.~\eqref{ode}, we aim to check the
existence of an invertible holomorphic transformation
\begin{equation}\label{transformation}
u=\psi(x,y)\,,\quad t=\phi(x,y)\,,
\end{equation}
which maps Eq.~\eqref{ode} into a linear $n$-th order homogeneous equation
\begin{equation}\label{lode}
u^{(n)}(t)+\sum_{k=0}^{n-1} a_{k}(t)\,u^{(k)}(t)=0\,,\quad u^{(k)}:=\frac{d^ku}{dt^k}\,.
\end{equation}
The inequation provides local invertibility of diffeomorphism $(x,y) \rightarrow (t,u)$,
\begin{equation}\label{Jacobian}
\mathsf{det}(J) \neq 0\,,
\end{equation}
where
$$
J := \begin{pmatrix} \phi_x & \phi_y \\ \psi_x & \psi_y \end{pmatrix}.
$$

The classical approach of Lie~\cite{lie1893vorlesungen} consists of considering a group of (symmetry) transformations that map a set of solutions into itself. In this regard, linear equations admit a rich group of symmetries that we use to identify linearizability. A powerful way to study the symmetry properties of Eq.~\eqref{ode} is to consider the tangent space with {\em infinitesimal} transformation generators defined by the Taylor series in the vicinity of the identity mapping:
\begin{equation}\label{inf_trans}
\tilde{x}=x+ \varepsilon\,\xi(x,y) + \mathcal{O}(\varepsilon^2)\,,\ \ \tilde{y}=y + \varepsilon\,\eta(x,y) +
\mathcal{O}(\varepsilon^2)\,.
\end{equation}
Then {\em invariance condition} for Eq.~\eqref{ode} under the transformation given by Eq.~\eqref{inf_trans} could also be obtained by the differential equality
\begin{equation}\label{invariance}
 {\cal{X}}(y^{(n)}+f(x,y,\ldots,y^{(n-1)}))|_{y^{(n)}+f\left(x,y,\ldots,y^{(n-1)}\right)=0}=0\,,
\end{equation}
where the {\em symmetry operator} reads
\begin{equation}\label{symm_generator}
{\cal{X}}:=\xi\,{\partial_x}+\sum_{k=0}^n\eta^{(k)}\partial_{y^{(k)}}\,,\
\eta^{(k)}:=D_x\eta^{(k-1)}-y^{(k)}D_x\,\xi\,,
\end{equation}
$\eta^{(0)}:=\eta$, and $D_x:=\partial_x+\sum_{k\geq 0}y^{(k+1)}\partial_{y^{(k)}}$
is the total derivative operator with respect to $x$.

For a given ODE with rational right-hand side $f$, the invariance condition Eq.~\eqref{invariance} means that its left-hand side vanishes when
Eq.~\eqref{ode} holds. Then the application of Eq.~\eqref{symm_generator} to the left-hand side of Eq.~\eqref{ode}
and the substitution of $y^{(n)}$ with $-f(x,y,\ldots,y^{(n-1)})$ in the resulting expression leads to the equality $g=0$
with the polynomial dependence of $g$ on the derivatives $y^\prime,\ldots,y^{(n-1)}$. Since, by
Eq.~\eqref{inf_trans}, the functions  $\xi$ and $\eta$ do not depend on these derivatives, the equality $g=0$
holds if and only if all coefficients in $y^\prime,\ldots,y^{(n-1)}$ are equal to zero. This leads to an
overdetermined linear PDE system in $\xi$ and $\eta$ called {\em determining system} which is the main object for symbolic analysis.

In differential algebra, linear partial differential equations form an ideal in the ring of differential polynomials. At the beginning of the 20th century, Maurice Janet introduced a method of completing it to a normal form. For a given ranking, the algorithm yields an output system equivalent to the input one and initial data, ensuring the existence and uniqueness of a formal power series solution. Initial data determines the dimension of the solution space. In the case of constant coefficients, a linear differential system could be transformed into a polynomial equations system, and the Janet's basis is just a specific case of the Groebner basis.

The set of infinitesimal symmetry operators is a {\em Lie algebra} with Lie bracket $[\cdot,\cdot]$. If it is a finite-dimensional linear space, we denote its dimension by $m$.  Any set of $m$ linearly independent vectors forms a vector space basis. All possible Lie brackets could be expressed by 
\begin{equation}\label{LieAlgebra}
[{\cal{X}}_i,{\cal{X}}_j]=\sum_{k=1}^{m} C^k_{i,j}{\cal{X}}_k\,,\quad 1\leq i<j\leq m\,,
\end{equation}
where $C^k_{i,j}$ are important characteristics which are denoted as the {\em structure constants} of the Lie algebra.

\begin{theorem}
The point symmetry algebra of the scalar ODE given by Eq.~\eqref{ode} is always finite-dimensional if $n > 1$.
\end{theorem}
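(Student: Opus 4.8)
The statement is the classical Lie result that a scalar ODE of order $n>1$ has a finite-dimensional point symmetry algebra. The plan is to reduce everything to second order first, and then bound the dimension by analyzing the determining system directly. The key observation is that prolongation commutes with the reduction argument: if $\mathcal{X}$ is a point symmetry of Eq.~\eqref{ode}, then $\xi$ and $\eta$ depend only on $(x,y)$, so the entire symmetry is encoded in the two functions $\xi(x,y),\eta(x,y)$. I would therefore focus on showing that the space of admissible pairs $(\xi,\eta)$ is finite-dimensional.

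Let me write the proof for $n=2$ explicitly and then indicate the induction. For $n=2$, write Eq.~\eqref{ode} as $y''=F(x,y,y')$ with $F=-f$. The invariance condition Eq.~\eqref{invariance} with the prolongation formula Eq.~\eqref{symm_generator} gives, after substituting $y''=F$, a polynomial identity in $y'$. Expanding $\eta^{(1)}=\eta_x+(\eta_y-\xi_x)y'-\xi_y (y')^2$ and $\eta^{(2)}$ similarly, the left-hand side becomes a cubic polynomial in $y'$ whose four coefficients must vanish. The coefficient of $(y')^3$ yields $\xi_{yy}=0$ (when $F$ is at most quadratic in $y'$, which is the linearizable case; in general one still gets that $\xi,\eta$ satisfy an overdetermined linear system). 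Differentiating the remaining equations with respect to $y'$ repeatedly, one shows that all partial derivatives of $\xi$ and $\eta$ of sufficiently high order are expressible through finitely many of them, i.e. the jet of $(\xi,\eta)$ at a point is determined by a finite-dimensional amount of initial data. Concretely: $\xi$ is at most quadratic in $y$, $\eta$ at most quadratic in $y$, and then the $x$-dependence is governed by a system of ODEs with the remaining coefficient functions; a compatibility/prolongation argument (à la Cartan–Kähler, or simply repeated cross-differentiation) shows the solution space has dimension at most $8$ for $n=2$. The clean way to phrase this without case analysis is: the determining system is a linear PDE system of finite type, hence its solution space is finite-dimensional.

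For general $n>1$ I would run an induction on the order using differentiation with respect to $y^{(n-1)}$. The invariance condition is polynomial in $y',\dots,y^{(n-1)}$; collecting the coefficient of the top power of $y^{(n-1)}$ forces relations among $\xi_y,\eta_y$ (e.g. $\xi$ depends on $(x,y)$ only and the highest prolongation $\eta^{(n)}$ contributes a term $-n\, y^{(n)}\,D_x\xi$, i.e. $-nF\,D_x\xi$ plus lower terms), and successive coefficients express higher $x$- and $y$-derivatives of $\xi,\eta$ in terms of lower ones modulo the (rational) coefficients of $f$ and their derivatives. The upshot is again that the determining system is of finite type: there is an integer $N$ (depending on $n$) such that the $N$-jet of the pair $(\xi,\eta)$ at any regular point determines the symmetry uniquely, so $m\le \dim(\text{space of }N\text{-jets})<\infty$. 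The hypothesis $n>1$ enters precisely because for $n=1$ the coefficient of $y'$ in the prolongation does not close up — every function $\xi,\eta$ with $\eta_x=\xi_x F+\dots$ is admissible, giving the well-known infinite-dimensional symmetry algebra of a first-order ODE.

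**Main obstacle.** The delicate point is making "finite type" rigorous without drowning in the combinatorics of the prolongation formula $\eta^{(k)}=D_x\eta^{(k-1)}-y^{(k)}D_x\xi$ for general $n$: one must verify that the leading coefficients of the polynomial-in-derivatives identity genuinely close the system, i.e. that no new independent function of $(x,y)$ is introduced at any prolongation level. I expect to handle this by an explicit bound — showing $\xi$ is polynomial in $y$ of degree $\le 2$ and $\eta$ polynomial in $y$ of degree $\le n$ (or similar), after which only finitely many functions of $x$ remain and they satisfy a closed linear ODE system whose coefficients are built from $f$; counting initial conditions then gives the explicit dimension bound. The rationality of $f$ guarantees these coefficient functions are meromorphic, so the ODE-system solution space is finite-dimensional on any connected domain avoiding the poles.
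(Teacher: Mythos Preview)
Your high-level strategy --- show that the determining system for $(\xi,\eta)$ is a linear PDE system of finite type --- is exactly the paper's approach. The gap is in how you propose to establish finite type. For general $f$ (even rational in $y'$), the invariance condition after substituting $y''=F$ is \emph{not} a cubic polynomial in $y'$: the prolongation $\eta^{(2)}$ contains the terms $(\eta_y-2\xi_x-3\xi_y y')F$ and $-\eta^{(1)}F_{y'}$, which inherit whatever $y'$-dependence $F$ has. So your extraction of ``four coefficients'' and the claim $\xi_{yy}=0$ are only valid in the very special case where $F$ is already polynomial (indeed at most cubic) in $y'$. More seriously, the mechanism you propose in the obstacle section --- bounding $\xi$ and $\eta$ as polynomials in $y$ of degree $\le 2$ and $\le n$ --- is simply false for a generic equation: the symmetries of, say, $y''=e^{y}$ are not polynomial in $y$. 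That route cannot close the system, and without it you have only asserted finite type, not proved it.

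The paper fixes this by Taylor-expanding $f$ in $y'$, $f=\sum_{i\ge 0} f_i(x,y)(y')^i$, and reading off the resulting infinite family of coefficient equations. The key observation is that only the coefficients of $(y')^0,\dots,(y')^3$ carry second-order derivatives of $(\xi,\eta)$; they yield four relations
\[
\eta_{xx}=F_1,\qquad 2\eta_{xy}-\xi_{xx}=F_2,\qquad \xi_{yy}=F_3,\qquad 2\xi_{xy}-\eta_{yy}=F_4,
\]
with each $F_j$ depending only on $\xi,\eta$, their first derivatives, and the coefficients $f_i$. Differentiating these four equations in $x$ and $y$ and taking linear combinations expresses all eight third-order derivatives of $(\xi,\eta)$ through lower-order data; this already bounds the solution space by $12$, and imposing the four original second-order relations cuts the bound to $8$. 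The remaining equations (from $(y')^k$, $k\ge 4$) are first-order constraints and cannot enlarge the space. For $n\ge 3$ the paper does not carry out the analogous computation but cites Ovsiannikov; your proposed ``induction on the order'' is not the standard argument and would need substantial work to be made rigorous.
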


\begin{proof}Assuming the right-hand side of the second-order ODE is an analytical function, let us expand it in the Taylor series
$$f(x,y,y') = \sum_{i=0}^{\infty} f_i(x,y) (y')^i\,.$$
A formal procedure based on the previous proposition leads to an infinite number of equations, where the most important ones are the first four equations:
\begin{equation}
\begin{gathered}
\frac{\partial^2 \eta}{\partial x^2} = F_1(\xi_x, \xi_y, \eta_x, \eta_y, \xi, \eta)\,,\,\, -\frac{\partial^2 \xi}{\partial x^2} + 2 \frac{\partial^2 \eta}{\partial x \partial y} = F_2(\xi_x, \xi_y, \eta_x, \eta_y, \xi, \eta)\,,\\
\frac{\partial^2 \xi}{\partial y^2} = F_3(\xi_x, \xi_y, \eta_x, \eta_y, \xi, \eta)\,, \,\,-\frac{\partial^2 \eta}{\partial y^2} + 2 \frac{\partial^2 \xi}{\partial x \partial y} = F_4(\xi_x, \xi_y, \eta_x, \eta_y, \xi, \eta)\,.
\end{gathered}
\end{equation}
Differentiation by $x$ and $y$, and linear combination leads to all possible third-order derivatives of $\xi$ and $\eta$:
\begin{equation}
\begin{gathered}
(\xi, \eta)_{xxx} = (\bar{F}_1, \bar{G}_1)\,,\,\, (\xi, \eta)_{xxy} = (\bar{F}_2, \bar{G}_2)\,,\\ (\xi, \eta)_{xyy} = (\bar{F}_3, \bar{G}_3)\,,\,\, (\xi, \eta)_{yyy} = (\bar{F}_4, \bar{G}_4)\,.
\end{gathered}
\end{equation}
It is a finite-dimensional system with a maximal dimension of 12. Adding four constraints implies that the dimension of the solution space is at most 8 (it is possible to achieve as shown in Example~\ref{example:one}). For higher orders, the proof is similar; we refer to previous work~\cite{ovsiannikov2014group}.
\end{proof}

A nice property of the finite-dimensional Lie symmetry algebra is that the structure constants table can be found exactly without any heuristics. Using differential-elimination algorithms, we can complete it to the involutive form so that we will know the dimension of its solution set. The Taylor series of $m$ linearly independent vectors ${\cal{X}}_i$ being substituted into Eq.~\eqref{LieAlgebra} leads to the infinite system of linear equations for a finite number of coefficients $C^k_{i,j}$. This system is always equivalent to some truncated version, which leads to an efficient procedure for obtaining the structure coefficients \cite{reid1991finding}.

We denote the Lie symmetry algebra by $L$ and set $m=\dim(L)$.  Its {\em derived algebra}  $\mathcal{D}\subset L$
is a subalgebra that consists of all commutators of pairs of elements in $L$.

\begin{theorem}\cite{lyakhov2017algorithmic} Scalar differential Eq.~(\ref{ode}) with $n\geq 2$ is linearizable if and only if one of the following conditions is fulfilled:
\begin{enumerate}
\item $n=2\,,m=8$ or $n\geq 3\,,m=n+4$ (maximality of dimension space);
\item $n\geq 3\,,m\in \{n+1,n+2\}$ and derived algebra $\mathcal{D}$ is abelian of dimension $n$.
\end{enumerate}
\end{theorem}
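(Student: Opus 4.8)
The plan is to argue entirely at the level of the abstract symmetry algebra together with its realization by vector fields in the $(x,y)$-plane. The basic fact is that a point transformation of the form~\eqref{transformation} maps the symmetry generators of Eq.~\eqref{ode} onto those of its image, and does so as an isomorphism of Lie algebras (it acts linearly on vector fields and preserves the bracket). Hence, if Eq.~\eqref{ode} is linearizable, its symmetry algebra $L$, with its realization, is conjugate to the symmetry algebra of some equation of the form~\eqref{lode}; conversely, to prove linearizability it suffices to recover, from the structure of $L$, coordinates $(t,u)$ in which Eq.~\eqref{ode} acquires the form~\eqref{lode}.

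For the ``only if'' direction I would invoke the known classification of point symmetry algebras of linear $n$-th order equations: Lie's theorem~\cite{lie1893vorlesungen} for $n=2$ (every linear second-order equation is point-equivalent to $u''=0$, whose symmetry algebra is $\mathfrak{sl}(3,\mathbb{R})$, of dimension $8$), and its higher-order analogue for $n\ge 3$, by which the symmetry dimension is $n+1$, $n+2$, or $n+4$, the last value occurring exactly for equations equivalent to $u^{(n)}=0$. I would then read off the derived algebra in each case: in the generic cases $m\in\{n+1,n+2\}$ the algebra contains the $n$-dimensional \emph{abelian} ideal $\mathfrak a=\langle\sigma_1(t)\partial_u,\dots,\sigma_n(t)\partial_u\rangle$ of ``solution superposition'' symmetries (with $\{\sigma_i\}$ a fundamental system) together with the scaling $u\partial_u$, and a bracket computation gives $\mathcal D=\mathfrak a$; whereas for $m=n+4$ (and for $m=8$ when $n=2$) the derived algebra is strictly larger and non-abelian --- it contains a copy of $\mathfrak{sl}(2,\mathbb{R})$, respectively equals $\mathfrak{sl}(3,\mathbb{R})$. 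This shows the three admissible profiles are exhaustive and that the pair $(m,\mathcal D)$ separates them.

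For the ``if'' direction I would split into the two cases. In the maximality case, a symmetry algebra of dimension $8$ for $n=2$ (respectively $n+4$ for $n\ge 3$) is already known to force point-equivalence to $u^{(n)}=0$ --- Lie's theorem and its higher-order counterpart --- hence Eq.~\eqref{ode} is linearizable. In the case $m\in\{n+1,n+2\}$ with $\mathcal D$ abelian of dimension $n$, the starting observation is that an abelian Lie algebra of vector fields on a two-dimensional manifold containing two pointwise linearly independent elements has dimension at most $2$ (straighten those two to $\partial_t,\partial_u$; any further commuting field then has constant coefficients); since $\dim\mathcal D=n\ge 3$, all of $\mathcal D$ is tangent to a common direction field, and straightening that field yields coordinates in which $\mathcal D=\langle\sigma_1(t)\partial_u,\dots,\sigma_n(t)\partial_u\rangle$ with the $\sigma_i$ linearly independent functions of $t$ only. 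Invariance of Eq.~\eqref{ode} under each $\sigma_i(t)\partial_u$ then means that through any solution passes an $n$-parameter affine family of solutions, which by uniqueness must be the whole solution set; so the $n$-th order equation carrying this solution manifold is $u^{(n)}+\sum_{k=0}^{n-1}a_k(t)u^{(k)}+b(t)=0$, linear and already solved for the top derivative (its leading coefficient being the non-vanishing Wronskian of the $\sigma_i$, normalized to $1$). Finally the remaining generator(s) of $L$ normalize $\mathcal D$, and within the two-dimensional realization one of them (or a suitable combination) must act on $\mathcal D$ by a non-trivial scalar, i.e.\ be conjugate to $u\partial_u$; invariance under $u\mapsto\lambda u$ then forces $b\equiv 0$, leaving exactly Eq.~\eqref{lode}. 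The explicit change of variables performing this straightening is obtained from at least one solution of the Bluman--Kumei system attached to $\mathcal D$.

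The step I expect to be the main obstacle is the last part of the ``if'' direction: proving that the purely abstract hypotheses ``$m\in\{n+1,n+2\}$ and $\mathcal D$ abelian of dimension $n$'' genuinely force the quotient $L/\mathcal D$ to act on $\mathcal D$ with a scaling generator, so that the inhomogeneous term $b(t)$ really does vanish. This requires the finer part of Lie's classification of realizations of low-dimensional Lie algebras in the plane --- constraining how a two-dimensional quotient can act on an $n$-dimensional abelian ideal --- and it is also the point at which the argument becomes effective only through solving the associated Bluman--Kumei equations, which is the source of the computational cost noted in the abstract.
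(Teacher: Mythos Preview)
The present paper does not prove this theorem at all: it is quoted with a citation to~\cite{lyakhov2017algorithmic} and followed immediately by Remark~\ref{symmetrycaseslinearODE}, with no proof environment in between. There is therefore no argument in this text against which to compare your proposal.

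On the substance of your sketch, the overall strategy is the standard one and is sound. One correction, however: the step you single out as ``the main obstacle'' is not actually needed. Once you have straightened $\mathcal D$ to $\langle\sigma_1(t)\partial_u,\dots,\sigma_n(t)\partial_u\rangle$ and deduced that in these coordinates the equation reads
\[
u^{(n)}+\sum_{k=0}^{n-1}a_k(t)\,u^{(k)}+b(t)=0,
\]
you are already finished, because an inhomogeneous linear equation is itself point-equivalent to a homogeneous one via $(t,u)\mapsto(t,\,u-u_p(t))$ for any particular solution $u_p$. This is a transformation of the form~\eqref{transformation} with Jacobian determinant~$1$, so it is admissible. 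Hence you do not need to prove that $L/\mathcal D$ contains a scaling generator, nor to invoke the finer classification of planar realizations to kill $b(t)$; the Bluman--Kumei step is required only for the straightening of $\mathcal D$, not for this last elimination.
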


\begin{remark}\label{symmetrycaseslinearODE}
Symmetry dimension maximality implies equivalence to trivial the equation $u^{(n)}(t) = 0$, $m = n+2$ corresponds to the constant coefficients case for some linearizing mapping, and $m = n+1$ matches principally non-constant coefficient linearizability. 
\end{remark}

Solutions of differential equations (partial or ordinary) are defined in differential algebra as formal power series. The natural question is whether they are convergent. Riquier has shown that this is correct if an appropriate ranking is used. 

\begin{definition} A ranking $R$ is a Riquier ranking if $D_1 u > D_2 u$ (where $u$ is differential indeterminate and $D_1, D_2$ are differential operators) implies $D_1 v > D_2 v$ for all differential indeterminates $v$.
\end{definition}

Riquier rankings are quite restrictive in differential algebra, but there are practical algorithms to construct them \cite{rust1997rankings}. As Riquier proved in his analyticity theorem (a generalization of the Cauchy–Kovalevskaya theorem), the unique formal power series defined by initial data in the Riquier ranking is converging. It justifies the formal certificate obtained by \textbf{LinearizationTestI}~\cite{lyakhov2017algorithmic,lyakhov2020algorithmic}. Widely used in computer algebra, the orderly ranking could not be sufficient for convergence of a formal power series solution as shown by Lemaire's elementary example \cite{lemaire2003orderly}.

%%%%%%%%%%%%%%%%%%%%%%%%%%%%%%%%%%%%%%%%%%%%%%%%%%
\section{Linear Equations with Constant Coefficients}
\label{sec:3}

In this chapter, we focus on the special case $m = n+2$ which corresponds to linearizability with constant coefficients (non-trivial case). We will show that for this case, we can algorithmically find the form of equation just by simple manipulations with the abstract Lie algebra of symmetries.

Let firstly remind the reader that there exists a one-to-one correspondence between the linear homogeneous scalar ordinary differential equation of $n$-th order and its characteristic polynomial
$$
u^{(n)}(t)+\sum_{k=0}^{n-1} a_k\,u^{(k)}(t)=0\, \leftrightarrow [u = e^{z t}] \leftrightarrow\, f(z) = \prod^{n}_{i=1}(z-\lambda_i)\,,
$$
thus without limitation of generality we will work mostly with characteristic polynomials.

The transformation
\begin{equation}\label{eq-trans}
\bar{t}=\frac{t}{k},\,\, \bar{u}=u \, \mathsf{exp} \left(\frac{bt}{k} \right)
\end{equation}
defines an isomorphism in the space of linear homogeneous differential equations with constant coefficients. Obviously, it maps 
\begin{equation}\label{arbitrariness}
f_1(z)=\prod^{n}_{i=1}(z - \lambda_i) \rightarrow f_2(z)=\prod^{n}_{i=1}(z - k\bar{\lambda}_i - b)\,.
\end{equation}
\begin{theorem}\label{th-kb}
Equivalence transformations of linear equation with constant coefficients define arbitrariness of characteristic polynomial up to values of $k, b$ in~(\ref{arbitrariness}).
\end{theorem}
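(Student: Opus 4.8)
The plan is to read off the characteristic polynomial of a constant-coefficient linearizable equation from intrinsic data of its symmetry algebra $L$, and then to track how that data can change under a point transformation. Fix a linear homogeneous equation $u^{(n)}+\sum_{k}a_k u^{(k)}=0$ with constant $a_k$ in the regime $m=n+2$, pick a basis $h_1,\dots,h_n$ of its solution space, and note that $L$ is spanned by the translation $\partial_t$, the dilation $u\partial_u$, and the $n$ mutually commuting fields $h_j\partial_u$. First I would write out the brackets $[\partial_t,u\partial_u]=0$, $[u\partial_u,h_j\partial_u]=-h_j\partial_u$, $[\partial_t,h_j\partial_u]=h_j'\partial_u$ (the last one again a solution field, because constant coefficients keep the solution space invariant under $\partial_t$), and $[h_i\partial_u,h_j\partial_u]=0$. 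From these it follows that $I:=\langle h_1\partial_u,\dots,h_n\partial_u\rangle$ is exactly the derived algebra $\mathcal D(L)$ — it lies in $[L,L]$ because of the $u\partial_u$-brackets, and it contains $[L,L]$ because every bracket lands in it — so $I$ is the canonical abelian ideal of dimension $n$ predicted for $m=n+2$ by the linearizability criterion \cite{lyakhov2017algorithmic}, and $L/I$ is the two-dimensional abelian algebra with basis $\overline{\partial_t},\overline{u\partial_u}$.

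The crucial observation is that, since $I$ is abelian, $\mathrm{ad}$ descends to a faithful representation $\rho\colon L/I\to\mathfrak{gl}(I)$ in which $\rho(\overline{u\partial_u})=-\mathrm{Id}$ while $\rho(\overline{\partial_t})$ is the matrix of $\partial_t$ acting on the solution space; consequently $\det\!\bigl(z\,\mathrm{Id}-\rho(\overline{\partial_t})\bigr)$ is precisely the characteristic polynomial $\prod_{i=1}^{n}(z-\lambda_i)$ of the ODE, with Jordan blocks accounting for repeated roots. Now let $\Phi$ be an invertible point transformation sending one such equation $E_1$ to another $E_2$. It is standard that $\Phi$ pushes forward to a Lie-algebra isomorphism $\Phi_*\colon L_1\to L_2$; this isomorphism carries $\mathcal D(L_1)=I_1$ onto $\mathcal D(L_2)=I_2$, induces an isomorphism $\overline{\Phi_*}\colon L_1/I_1\to L_2/I_2$, and intertwines $\rho_1$ with $\rho_2$ via the linear isomorphism $\Phi_*|_{I_1}$. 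Expanding $\overline{\Phi_*}(\overline{\partial_t})=a\,\overline{\partial_t}+b\,\overline{u\partial_u}$ in the basis of $L_2/I_2$, the intertwining property says that $\rho_1(\overline{\partial_t})$ is conjugate to $\rho_2\bigl(a\,\overline{\partial_t}+b\,\overline{u\partial_u}\bigr)=a\,\rho_2(\overline{\partial_t})-b\,\mathrm{Id}$; comparing characteristic polynomials of conjugate matrices yields $\prod_i(z-\lambda_i^{(1)})=\prod_i\bigl(z-(a\lambda_i^{(2)}-b)\bigr)$, i.e.\ the roots of the two characteristic polynomials differ by the affine substitution $\lambda\mapsto a\lambda-b$. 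One has $a\neq0$, since $a=0$ would force all $\lambda_i^{(1)}$ to coincide, making $E_1$ point-equivalent to $u^{(n)}=0$ and hence contradicting $m=n+2$; and $\overline{\Phi_*}(\overline{u\partial_u})$ is pinned down — it is the unique direction in $L/I$ acting on $I$ by a scalar, and conjugation preserves that scalar — so all the freedom resides in the pair $(a,b)$. Renaming $(a,b)$ as the $(k,b)$ of \eqref{eq-trans} gives exactly the arbitrariness recorded in \eqref{arbitrariness}.

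The converse inclusion — that every affine root substitution with $k\neq0$ is realized by an actual equivalence — is already furnished by the explicit transformation \eqref{eq-trans} together with the computation \eqref{arbitrariness}, so the two directions combine to prove the theorem. The delicate point is the completeness direction: one must exclude the possibility that some unexpected point transformation, not of the reparametrization-and-exponential-gauge type of \eqref{eq-trans}, produces a further deformation of the characteristic polynomial; this is exactly what passing to the abstract symmetry algebra buys, since an arbitrary point transformation, however written, must respect $\mathcal D(L)$ and the spectral data of $\mathrm{ad}_{\partial_t}$ on it. The conceptual subtlety to flag in the write-up is that $\mathrm{ad}_{\partial_t}$ is not canonically attached to $L$: only its class modulo $I$ and modulo adding multiples of $\mathrm{ad}_{u\partial_u}=-\mathrm{Id}$ is intrinsic, which is precisely why the characteristic polynomial is determined only up to the two-parameter affine family $\lambda\mapsto k\lambda+b$, and this irreducible ambiguity is the content of the statement.
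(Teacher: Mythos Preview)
Your argument is correct, and in fact it supplies more than the paper does. In the text Theorem~\ref{th-kb} carries no formal proof: the only justification is the two displayed lines immediately preceding it, which exhibit the explicit family \eqref{eq-trans} and verify by direct substitution that it effects the root transformation \eqref{arbitrariness}. That establishes only the easy direction --- every affine substitution $\lambda\mapsto k\lambda+b$ with $k\neq0$ is realised by some equivalence --- and it is the only direction actually invoked later, at the close of Example~\ref{example:one}. You additionally prove the converse, that an \emph{arbitrary} point equivalence between constant-coefficient equations in the $m=n+2$ regime can move the roots by at most an affine map: you identify $\mathcal D(L)$, read the characteristic polynomial as $\det\bigl(z\,\mathrm{Id}-\mathrm{ad}_{\overline{\partial_t}}|_{\mathcal D(L)}\bigr)$, and track how a Lie-algebra isomorphism can shift $\overline{\partial_t}$ inside the two-dimensional abelian quotient $L/\mathcal D(L)$ while $\overline{u\partial_u}$ is pinned to itself because its adjoint action is the unique scalar. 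This is precisely the mechanism underlying Example~\ref{example:one} and Algorithm~\ref{alg3}, but the paper never turns it back into a completeness argument for the theorem; what your approach buys is a closed proof of both inclusions, whereas the paper records only the constructive half and leaves the ``no further arbitrariness'' claim implicit.
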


\begin{example}\label{example:one}
Let us consider firstly a simple but conceptual example. The characteristic polynomial has two equal eigenvalues $\lambda_1$ and $\lambda_2$. It means that the fundamental solution of the corresponding ODE consists of 
\begin{equation}
y_1(x) = e^{\lambda_1 x}\,, \,\,y_2(x) = x e^{\lambda_1 x}\,, \,\,y_3(x) = e^{\lambda_2 x}\,.
\end{equation}
Then, the derived algebra $D$ of the symmetry algebra is defined by the linear span of three operators:
\begin{equation}
\left\{e^{\lambda_1 x} \frac{\partial}{\partial y}\,, \,\,x e^{\lambda_1 x} \frac{\partial}{\partial y}\,, \,\,e^{\lambda_2 x} \frac{\partial}{\partial y} \right\}\,.
\end{equation}
The operator
$$
Y = \frac{\partial}{\partial x}
$$
defines a left action of the Lie algebra $(Y, \cdot)$. It maps elements from the derived algebra into itself ($y_i \rightarrow w_i$). Then, we can expand new elements $w_i$ in a basis $y_i$ of the linear space and immediately obtain

\begin{equation}
\begin{bmatrix}
  \lambda_1 & 0 & 0 \\
  1 & \lambda_1 & 0 \\
  0 & 0 & \lambda_2
\end{bmatrix}
\begin{bmatrix}
  y_1 \\
  y_2 \\
  y_3
\end{bmatrix}
=
\begin{bmatrix}
  w_1 \\
  w_2 \\
  w_3
\end{bmatrix}.
\end{equation}

An important point is that the matrix of this expansion $A_{ij}: \textbf{A} \cdot y = w$ shares the same characteristic polynomial $f(z)$ as the initial ODE. Further, let us suppose we have basis vectors $\tilde{y}_i$ which are not individual exponents. Then, we can expand them using the basis of $y_i = T \tilde{y}_i, \,\mathsf{det}(T) \neq 0$:

\begin{equation}
\begin{bmatrix}
  \lambda_1 & 0 & 0 \\
  1 & \lambda_1 & 0 \\
  0 & 0 & \lambda_2
\end{bmatrix}
T
\begin{bmatrix}
  \tilde{y}_1 \\
  \tilde{y}_2 \\
  \tilde{y}_3
\end{bmatrix}
= T
\begin{bmatrix}
  \tilde{w}_1 \\
  \tilde{w}_2 \\
  \tilde{w}_3
\end{bmatrix}.
\end{equation}
Obviously, equivalent matrices
\begin{equation}
T^{-1}
\begin{bmatrix}
  \lambda_1 & 0 & 0 \\
  1 & \lambda_1 & 0 \\
  0 & 0 & \lambda_2
\end{bmatrix}
T \sim
\begin{bmatrix}
  \lambda_1 & 0 & 0 \\
  1 & \lambda_1 & 0 \\
  0 & 0 & \lambda_2
\end{bmatrix}
\end{equation}
share the same characteristic polynomial. This key observation is crucial for our coefficient recovery algorithm.
\end{example}

In the general situation of linearizable equation, we cannot distinguish operator $Y$, but instead, we can choose some non-trivial operator that does not belong to derived algebra. This operator could be described as
$\tilde{Y} = k Y + b I$, where $I$ is the identity operator. Repeating the whole procedure with operator $\tilde{Y}$ we get characteristic polynomial
$$
f(z) = (z - k \lambda_1 - b)^2 (k - k \lambda_2 - b)\,,
$$
which differs from the original characteristic polynomial by linear mapping of eigenvalues. But according to Theorem \ref{th-kb}, it corresponds to the characteristic polynomial of linear equation with constant coefficients in the same equivalence class.

%%%%%%%%%%%%%%%%%%%%%%%%%%%%%%%%%%%%%%%%%%%%%%%%%%
\section{Algorithmics}
\label{sec:4}

The described Example \ref{example:one} could be easily generalized to the general coefficient recovery method. In the general case, matrix $A_{ij}$ will contain several Jordan blocks. Without any principal changes, we can apply the above technique which leads to the following Algorithm~\ref{alg3}.

\begin{algorithm}
\caption{{\textsl{\bfseries{Coefficients Recovery ($m=n+2$).}}}
\label{alg3}}
{\bf Input:} $q$, a linearizable ODE in the form of Eq.~\eqref{ode} with $m=n+2$.\\
{\bf Output:} $E$, a linear ODE with constant coefficients which is equivalent to $q$.

\begin{algorithmic}[1]
\STATE Compute the {\textsl{\bfseries{Determining System}}} of the symmetry group infinitesimals.
\STATE Extract the abstract {\textsl{\bfseries{Lie Algebra}}} by means of structure constants.
\STATE Find {\textsl{\bfseries{Derived Algebra}}} by computing Lie brackets between all basis vectors.
\STATE Find {\textsl{\bfseries{Factor Space}}} of Lie algebra by {\textsl{\bfseries{Derived Algebra}}}: ({\textsl{\bfseries{F}}}, $\{e_1, e_2\}$).
\STATE Define non-trivial left action $\left[e_1\,, \,\cdot\right]$ or $\left[e_2\,, \,\cdot\right]$.
\STATE Form a matrix $A$ defined by action on basis vectors in {\textsl{\bfseries{Derived Algebra}}}.
\STATE Find the characteristic polynomial of this matrix.
\end{algorithmic}
\end{algorithm}

%%%%%%%%%%%%%%%%%%%%%%%%%%%%%%%%%%%%%%%%%%%%%%%%%%
\section{Conclusison}
\label{sec:5}

We have successfully constructed a new algorithm which helps to recover coefficients in a case of constant coefficients linearizability. The algorithm enhances \textbf{LinearizationTestII} (and \textbf{Algorithm 3} MapDE with Target$=$LinearDE) as knowledge of the form of equations decreases complexity of the algorithms based on differential algebra. In the case of $m = n+1$, the Lie algebra looks trivial and it is not possible to recover principally non-constant coefficients just by manipulations with the abstract Lie algebra of symmetries.

%%%%%%%%%%%%%%%%%%%%%%%%%%%%%%%%%%%%%%%%%%%%%%%%%%
\section*{Acknowledgements}

This work has been funded by the baseline funding of the KAUST Computational Sciences Group. The authors thank Yang Liu for useful discussions.
%{\color{blue}The reviewers' valuable comments that improved the manuscript are gratefully acknowledged.}

\bibliographystyle{unsrt}
\bibliography{main}
%%%%%%%%%%%%%%%%%%%%%%%%%%%%%%%%%%%%%%%%%%%%%%%%%%
\end{document}